\documentclass[12pt,oneside,reqno]{amsart}
\usepackage{caption,subcaption}
\usepackage{dcolumn}
\usepackage{amsmath,amsthm,amssymb}
\usepackage{todonotes}
\usepackage{url}
\usepackage{placeins}


\usepackage{array,booktabs}

\def\be{\begin{equation}}
\def\ee{\end{equation}}

\addtolength{\oddsidemargin}{-3pc}
 \addtolength{\textwidth}{6pc}

\newtheorem*{theorem*}{Theorem}
\newtheorem*{conj*}{Conjecture}

\theoremstyle{remark}
\newtheorem*{rem*}{Remark}
\newtheorem{lemma}{Lemma}

\def\ra{\rightarrow}



\usepackage{pdflscape}

\usepackage[color, final]{showkeys}
\usepackage{xcolor}

\usepackage{lmodern}
\makeatletter
\ifcase \@ptsize \relax
  \newcommand{\miniscule}{\@setfontsize\miniscule{4}{5}}
\or
  \newcommand{\miniscule}{\@setfontsize\miniscule{5}{6}}
\or
  \newcommand{\miniscule}{\@setfontsize\miniscule{5}{6}}
\fi
\makeatother

\newcounter{todocounter}

\def\ra{\rightarrow}

\begin{document}
 \title[Comment on EYM black hole]{Comment on a regular black hole\\ with Yang-Mills hair}

\author{Piotr Bizo\'n}
\address{Institute of Theoretical Physics, Jagiellonian
University, Krak\'ow, Poland}
\email{piotr.bizon@uj.edu.pl}

\author{Maciej Maliborski}
\address{University of Vienna, Faculty of Mathematics, Oskar-Morgenstern-Platz 1, 1090 Vienna, Austria}
\address{University of Vienna, Gravitational Physics, Boltzmanngasse 5, 1090 Vienna, Austria}
  \email{maciej.maliborski@univie.ac.at}

\date{\today}%
\begin{abstract}
We comment upon a black hole with Yang-Mills hair  presented in a recent preprint by Chen, Du, and Yau.
\end{abstract}
\maketitle

In a recent preprint \cite{CDY}, Chen, Du, and Yau claim that the spherically symmetric Einstein-Yang-Mills (EYM)  equations with gauge group $SU(2)$ admit a static black hole solution that is regular at the origin, has a single horizon, and is asymptotically~flat. This solution, constructed numerically as an endstate of a kind of  EYM heat flow, is said to have bounded variation and a curvature singularity at the horizon.
\vskip 0.1cm
The purpose of this short note is to point out that  Breitenlohner, Forg\'acs, and Maison (BFM) \cite{BFM} classified  all possible global behaviours of static spherically symmetric EYM solutions which are regular at the origin. In particular, they proved existence of a one-parameter family of regular  de Sitter-like bubbles (hereafter called BFM bubbles) which have an apparent horizon at some radius $r_0$.  Since the interior part of the static `black hole solution' presented in \cite{CDY} is regular at the origin, it must be a BFM bubble for $0\leq r<r_0$. It is less clear what is the exterior part because, as we will show below, no BFM bubble can be matched continuously\footnote{Smooth matching is  precluded by Penrose's singularity theorem.} at $r_0$ to an exterior asymptotically flat solution.  On the basis of very fragmentary description given in \cite{CDY}, we suspect that the exterior solution is just  Schwarzschild.  This means that the metric  and  YM potential have  finite jumps at $r_0$, in agreement with the authors of \cite{CDY} saying that their solution has bounded variation  (there is no curvature singularity at $r_0$ though, contrary to the claim in \cite{CDY}). In our view, such discontinuous configurations are physically meaningless, and so we refrain from commenting on their stability analysis, let alone  applications in cosmology as mentioned in \cite{CDY}. In the rest of this note, we elaborate on the above points following closely the exposition in~\cite{BFM}.
\vskip 0.1cm
In terms of the static spherically symmetric YM potential $W(r)$ and metric ansatz
\begin{equation}\label{metric1}
  ds^2=-e^{-2\delta(r)} A(r) dt^2 + \frac{dr^2}{A(r)}  + r^2 d\Omega^2,
\end{equation}
where $d\Omega^2$ is the round metric on the unit 2-sphere, the EYM equations take the form (with prime denoting $d/dr$)
\begin{subequations}\label{system_r}
\begin{align}
  &  (A W')' + \frac{2A}{r}\, W'^3 + \frac{W(1-W^2)}{r^2}=0,\\
 &r A'= 1-A -  2A\, W'^2 -\frac{(1-W^2)^2}{r^2},\\
&  r\delta' = -2\, W'^2.
\end{align}
\end{subequations}
This system has a one-parameter family of local  analytic solutions near the origin
\begin{equation}\label{ic}
  W(r)= -1+b r^2+\mathcal{O}(r^4),\quad A(r)\sim 1-4 b^2 r^2+\mathcal{O}(r^4),\quad \delta(r)\sim -4 b^2 r^2+\mathcal{O}(r^4),
\end{equation}
that are uniquely specified by the parameter $b$.
\vskip 0.1cm
 Using dynamical system methods, BFM analyzed how the global behaviour of these local solutions depends on $b$. Along the way, they proved existence of a growing sequence of positive values $b_n$ ($n\in \mathbb{N}$) for which the solutions  are globally regular and asymptotically flat (these solutions were discovered numerically by Bartnik and McKinnon \cite{BM} and independently proved to exist in \cite{SW}). The sequence $b_n$ accumulates at $b_{\infty}\approx 0.7074203$ and for $b>b_{\infty}$ the metric function $A(r)$ drops to zero at some~$r_0$ with $W'(r)$ diverging as $(r_0-r)^{-1/2}$. This is a coordinate singularity due to the bad behaviour of the coordinate $r$ at $A=0$. In order to desingularize the system~\eqref{system_r} BFM introduced $N=\sqrt{A}$ and  a new radial coordinate $\rho$ defined by $dr=r N d\rho$. Then, the metric \eqref{metric1} takes the form
\begin{equation}\label{metric2}
  ds^2=-e^{-2\delta(\rho)} N(\rho)^2 dt^2 + r(\rho)^2 (d\rho^2 + d\Omega^2),
\end{equation}
and the system \eqref{system_r} becomes (with overdot denoting $d/d\rho$)
\begin{subequations}\label{system}
\begin{eqnarray}
\dot{r} &=& r N,\\
\dot{W} &=& V,\\
\dot{V}&=& (2N-\kappa) V -W (1-W^2)\\
\dot{N} &=& (\kappa-N) N - \frac{2 V^2}{r^2},\\
\left(e^{-\delta} N\right)^{\dot{}} &=& (\kappa-N) e^{-\delta} N,
\end{eqnarray}
\end{subequations}
where
\begin{equation}\label{kappa}
  \kappa=\frac{1}{2N} \left(1+N^2+\frac{2V^2}{r^2}-\frac{(1-W^2)^2}{r^2}\right).
\end{equation}
The boundary conditions \eqref{ic} of regularity  at the origin  translate into the following asymptotic behaviour for  $\rho \ra -\infty$
\begin{equation}\label{ic2}
  r(\rho)\sim e^{\rho}, \quad W(\rho)\sim -1+b e^{2\rho},\quad V(\rho) \sim 2b e^{2\rho}, \quad N(\rho)\sim 1-2 b^2 e^{2\rho},\quad \delta(\rho)\sim -4b^2 e^{2\rho}.
\end{equation}
The reason of introducing the function $\kappa$ is that, as proved in \cite{BFM}, it remains bounded for all $\rho$ (for which the solution exists), in particular it stays finite when $N$ tends to zero. Since this fact is important in the subsequent analysis, we give here its proof.

\begin{lemma}{} For regular solutions of the system \eqref{system} one has $1\leq \kappa \leq 2-N$ for all $\rho$.
\end{lemma}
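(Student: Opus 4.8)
The plan is to convert the definition \eqref{kappa} of $\kappa$ into a single autonomous-type equation and then run a comparison argument anchored at the regular centre $\rho\to-\infty$. Differentiating $2N\kappa=1+N^2+2V^2/r^2-(1-W^2)^2/r^2$ along \eqref{system} (using $\dot r=rN$, $\dot W=V$, and the equations for $\dot V$, $\dot N$), the cross terms proportional to $VW(1-W^2)/r^2$ produced by the two $r^{-2}$ factors cancel, leaving $\dot\kappa=-(\kappa-N)^2+(1-W^2)^2/r^2$; eliminating $(1-W^2)^2/r^2$ once more via \eqref{kappa} collapses this to the Riccati form
\begin{equation*}
\dot\kappa=1-\kappa^2+\frac{2V^2}{r^2}.
\end{equation*}
I expect this cancellation to be the only genuinely computational step. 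Note that at the centre \eqref{ic2} gives $N\to1$, $\kappa\to1$ and $2-N\to1$, so \emph{both} inequalities are saturated as $\rho\to-\infty$.

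For the lower bound put $h=\kappa-1$, so that $\dot h=-h^2-2h+2V^2/r^2$, with nonnegative source. A short computation from \eqref{ic2} gives $h\sim 2b^2e^{2\rho}>0$ as $\rho\to-\infty$, so the solution sits strictly above $\kappa=1$ near the centre. To propagate this I would exploit the Riccati structure: while $h>0$ the reciprocal $p=1/h$ obeys $\dot p=1+2p-(2V^2/r^2)p^2\le1+2p$, so Gr\"onwall keeps $p$ finite on every bounded $\rho$-interval; hence $h$ can never reach $0$ and $\kappa>1$ for all finite $\rho$.

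For the upper bound put $G=2-N-\kappa$. Since $\dot G=-\dot\kappa-\dot N$ and the $2V^2/r^2$ terms in $\dot\kappa$ and $\dot N=(\kappa-N)N-2V^2/r^2$ cancel, substituting $\kappa=2-N-G$ yields $\dot G=3(N-1)^2+(3N-4)G+G^2$, again with a nonnegative source. Here \eqref{ic2} gives $G\sim\tfrac{12}{5}b^4e^{4\rho}>0$ near the centre (the $O(e^{2\rho})$ contributions to $2-N$ and to $\kappa$ cancel, so the sign is fixed at the next order), and the same device ($q=1/G$, $\dot q\le(4-3N)q\le Cq$ on any compact interval, $N$ being bounded there) propagates $G>0$, i.e.\ $\kappa<2-N$, to all finite $\rho$.

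The main obstacle is precisely the degeneracy just noted: because both bounds are equalities at $\rho\to-\infty$, one cannot simply invoke invariance of the regions $\{\kappa\ge1\}$ and $\{\kappa\le2-N\}$---on their boundaries $\dot h$ and $\dot G$ need not be positive, indeed they vanish when $V=0$ and when $N=1$ respectively. Two features rescue the argument: the \emph{sign} of the leading term in the centre expansion, which places the solution strictly inside each region for $\rho\ll0$, and the Riccati form of the $h$- and $G$-equations, which via the reciprocal substitution reduces ``staying positive'' to a linear Gr\"onwall estimate insensitive to the vanishing of the source. The degenerate case $b=0$ (the embedded abelian solution $W\equiv-1$, $N\equiv1$, $\kappa\equiv1$) saturates both inequalities identically and is handled trivially.
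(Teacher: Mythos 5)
Your proof is correct, and its skeleton is the same as the paper's: you derive the identical Riccati equation $\dot\kappa = 1 + \tfrac{2V^2}{r^2} - \kappa^2$ (your intermediate identity $\dot\kappa = -(\kappa-N)^2 + (1-W^2)^2/r^2$ and the cancellation of the $VW(1-W^2)/r^2$ cross terms both check out), you use the same two barriers ($\kappa=1$, and $\kappa+N=2$ --- your $G = 2-N-\kappa$ is just the reflection of the paper's $\gamma=\kappa+N$, and your $\dot G$ equation is consistent with the paper's $\dot\gamma = 1-\gamma^2+3N(\gamma-N)$), and you anchor both bounds in the centre expansions \eqref{ic2} exactly as the paper does. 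Where you genuinely differ is the step that excludes crossing of the barriers. The paper simply asserts $\dot\kappa\vert_{\kappa=1}>0$ and $\dot\gamma\vert_{\gamma=2}=-3(1-N)^2<0$ and concludes the lines cannot be crossed; as you observe, these are really only non-strict inequalities --- they degenerate precisely when $V=0$, respectively $N=1$ --- so the paper's one-line invariance argument quietly skips the tangency case. Your reciprocal substitutions $p=1/h$, $q=1/G$ combined with Gr\"onwall close exactly that gap: positivity of $h$ and $G$ is propagated by a linear growth bound on compact intervals that is insensitive to the vanishing of the source terms $2V^2/r^2$ and $3(N-1)^2$, and it even yields the strict inequalities for $b\neq 0$. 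So your argument is, if anything, tighter than the printed one; the only shared reliance is on the expansion coefficients in \eqref{ic2}, in particular the $\tfrac{12}{5}b^4e^{4\rho}$ term in $\kappa+N$, which both you and the paper take from the local analysis at the origin without rederiving.
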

\begin{proof}
From \eqref{ic2} we get
$\kappa(\rho) \sim  1+ 2 b^2 e^{2\rho}$ for $\rho\ra-\infty$,
hence  $\kappa>1$ near $-\infty$. Using \eqref{system} we obtain $\dot{\kappa}=1+\frac{2 V^2}{r^2} - \kappa^2$, hence $\dot\kappa\vert_{\kappa=1}>0$ and therefore $\kappa$ cannot cross the line $\kappa=1$ from above. Thus, $\kappa\geq 1$ for all $\rho$. Next, let $\gamma=\kappa+N$. From \eqref{ic2} we get
$\gamma(\rho) \sim  2-\frac{12}{5} b^4 e^{4\rho}$ for $\rho\ra-\infty$,
hence  $\gamma<2$ near $-\infty$. Using \eqref{system} we obtain that $\dot{\gamma}=1-\gamma^2+3N(\gamma-N)$, hence $\dot\gamma\vert_{\gamma=2}=-3(1-N)^2<0$ and therefore $\gamma$ cannot cross the line $\gamma=2$ from below. Thus, $\gamma\leq 2$ for all $\rho$.
\end{proof}

BFM showed that for sufficiently large $b$ the function $N(\rho)$ crosses zero at some~$\rho_0$, while $W(\rho)$ is monotone increasing from $-1$ to some $W(\rho_0)<0$. Note that, in view of Lemma~1, there is no singularity at $\rho_0$. It follows from equation (5a) that the radius $r(\rho)$ has a maximum at $\rho_0$ which corresponds to the apparent horizon. We call  this configuration  the BFM bubble\footnote{In the context of \cite{CDY}, it is irrelevant what happens beyond $\rho_0$ but, for the sake of completeness, we mention that $r(\rho)$  drops to zero at some $\rho_1>\rho_0$, giving rise to a compact $t=\text{const}$ geometry with a curvature singularity at $\rho_1$ (so called ``bag of gold" configuration).}.
\vskip 0.1cm
Let $W_0=W(\rho_0)$ and $V_0=V(\rho_0)$. Since $\kappa$ is bounded, it follows from \eqref{kappa} that
\begin{equation}\label{constraint_r0}
 r_0^2= (1-W_0^2)^2-2V_0^2.
\end{equation}
On the other hand, for black hole solutions with a regular horizon at $\rho=\rho_h$, it follows from (5d) that if $N(\rho_h)=0$ and $\dot N(\rho_h)>0$, then
\begin{equation}\label{constraint_rh}
 r(\rho_h) >  1-W^2(\rho_h),
\end{equation}
hence  $r_0$ cannot play a role of a regular horizon.
In fact, no asymptotically flat exterior solution can be glued continuously to a BFM bubble. To see this, let us solve the system \eqref{system} backwards starting from local asymptotically flat solutions
near infinity. Such solutions behave as follows for $\rho\ra\infty$
\begin{equation}\label{asym}
  r(\rho) \sim  e^{\rho} + M, \quad N(\rho)\sim 1 - M e^{-\rho},\quad W(\rho) \sim 1 - c e^{-\rho},\quad V(\rho) \sim c e^{-\rho},
\end{equation}
where  positive parameters $M$ (mass) and $c$ uniquely specify the solution.
Let $M_1\approx 0.8286$ be the mass of the first Bartnik-McKinnon solution and assume for simplicity of the argument that $M<M_1$ (which is the regime considered in \cite{CDY} anyway). Then, using the methods of \cite{BFM}, one can show that (for any $c$)  the function $N(\rho)$ remains bounded away from zero for all $\rho$ and  has a strictly positive global minimum at some $\rho_{\text{min}}$, proving the above claim. Moreover, a scaling argument shows that as $c\ra 0$, then $N(\rho_{\text{min}})$ tends to zero, while $r(\rho_{\text{min}})\ra 2M$, yielding for $\rho\geq \rho_{\text{min}}$   the Schwarzschild exterior in the limit.

\vskip 0.2cm
Finally, let us try to reproduce the results shown in Fig.~1 of \cite{CDY} by solving numerically the system \eqref{system} with the boundary conditions \eqref{ic2}.  To this order, we choose  $b=0.775$, which gives  $N(\rho)=0$ for $\rho_0\approx 0.8765$ and
\begin{equation} \label{data}
r_0 \approx 0.8073,\quad  W_0 \approx -0.1889,\quad  V_0\approx 0.3728,\quad \lim_{\rho\ra\rho_0} N(\rho) e^{-\delta(\rho)}\approx 5.5268.
\end{equation}
Using the notation of \cite{CDY}, we get $x_0=\ln(r_0)\approx -0.21$, as in the legend of Fig.1 in~\cite{CDY} (we did not fine-tune the parameter $b$ more accurately because $x_0$ is given in \cite{CDY} only up to two decimal places). Assuming that the exterior is Schwarzschild with mass $M=r_0/2\approx 0.4036$ (hence $W\equiv 1$ for $\rho>\rho_0$), below we plot the metric functions, YM potential, mass, and `effective charge'
vs. $\ln{r}$, duplicating Fig.~1 of \cite{CDY} but one difference: at $r_0$  our plot of $W$ has a jump from $W_0$ to  $1$, while the plot of $W$ in Fig.~1 (b) appears continuous. Note that the lapse function $N e^{-\delta}$ (not plotted in \cite{CDY})  also has a jump at $r_0$, as given in \eqref{data}.

\vspace{2ex}

\begin{figure}[h]
	\includegraphics[width=0.48\textwidth]{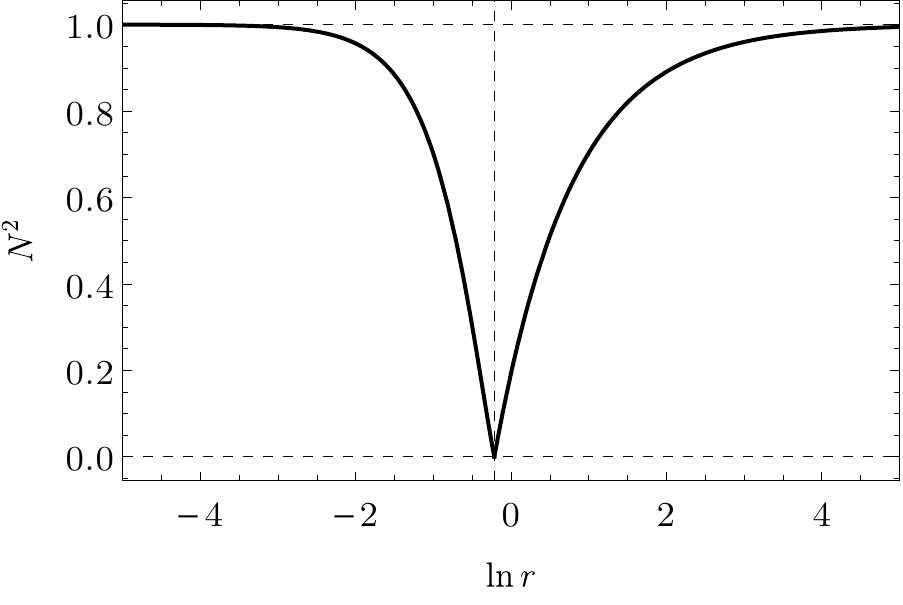}
	\includegraphics[width=0.48\textwidth]{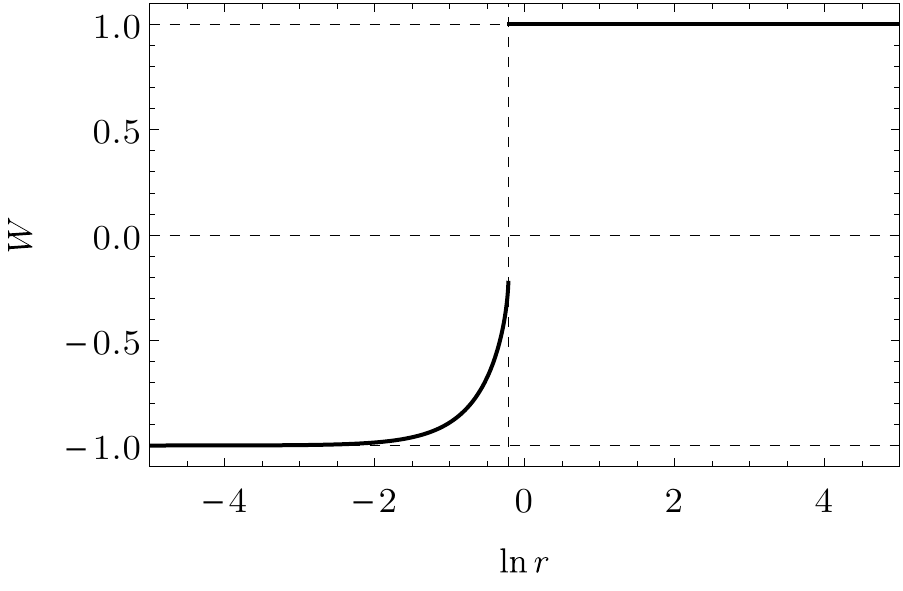}
	
	\includegraphics[width=0.48\textwidth]{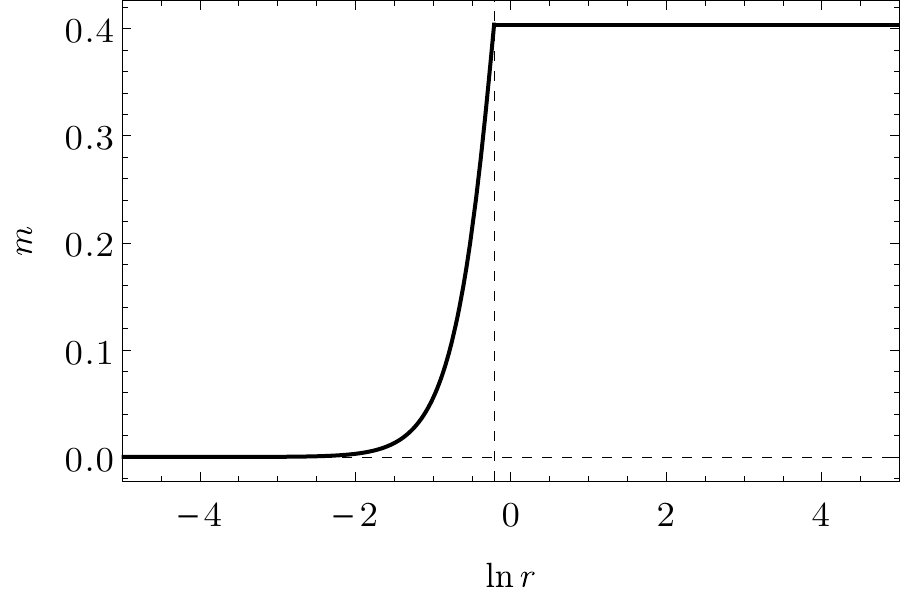}
	\includegraphics[width=0.48\textwidth]{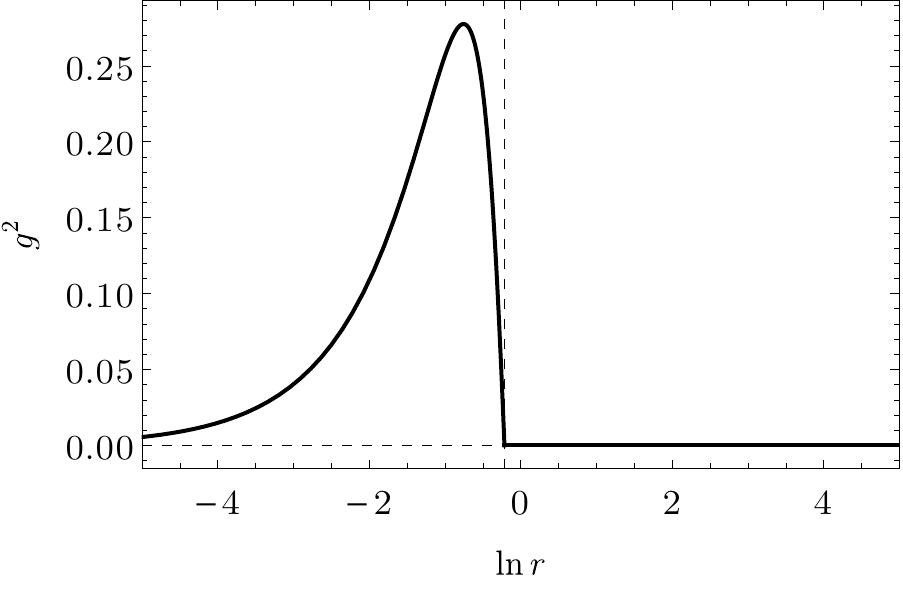}	
	\caption{\small{The BFM bubble (with the parameters given in the text) for $r<r_{0}$ matched to the exterior Schwarzschild with mass $M=r_{0}/2$ for $r>r_{0}$. The mass function and `effective charge' are defined by $m(r)=\frac{r}{2}(1-N^2(r))$ and $g^2(r)=2r(M-m(r))$, respectively. Except for the finite jump of $W(r)$ and sharp corners of other functions at $r_0$, this reproduces Fig.~1 in \cite{CDY}.}}
	\label{fig1}
\end{figure}

 The reason why the plots in \cite{CDY} are continuous and (to an eye) have no sharp corners is most likely due to the numerical method used by the authors. We suspect that their PDE solver, struggling  to satisfy a local error tolerance when the minimum of $A$ approached zero,  decreased the time step size until it became so small that the heat flow  got stuck before formation of the discontinuity. The agreement between Fig.~1 in \cite{CDY} and our Fig.~1 indicates that the resulting configuration approximately solves (away from $r_0$) the static EYM equations, however it is not clear to us why
 the heat flow proposed in \cite{CDY} should actually converge to a nontrivial steady state. Hopefully,  the supplemental materials promised in \cite{CDY} will shed some light on these issues.

\vskip 0.2cm
\noindent\emph{Acknowledgement.}
 This work was supported  by the National Science Centre grant no.\ 2017/26/A/ST2/00530.
 MM acknowledges the support of the START-Project Y963 of the Austrian Science Fund (FWF).

\end{document}